\newcolumntype{R}{>{\raggedleft\arraybackslash}X}
\newcolumntype{L}{>{\raggedright\arraybackslash}X}
\newcolumntype{C}{>{\centering\arraybackslash}X}
\newcolumntype{A}{>{\columncolor{gray!25}}C}
\newcolumntype{a}{>{\columncolor{gray!25}}c}
\theoremstyle{plain}
\newtheorem{proposition}{Proposition}
\def\keywords{\vspace{.5em} 
{\noindent \textit{Keywords}: }}
\def\JEL{\vspace{.5em} 
{\noindent \textbf{\emph{JEL} classification number}: }}
\def\AMS{\vspace{.5em} 
{\noindent \textbf{\emph{MSC} class}: }}
\title{Ranking matters: Does the new format select the best teams for the knockout phase in the UEFA Champions League?}
\author{
L\'aszló Csat\'o\thanks{~Institute for Computer Science and Control (SZTAKI), Hungarian Research Network (HUN-REN), Laboratory on Engineering and Management Intelligence, Research Group of Operations Research and Decision Systems, Budapest, Hungary \newline
Corvinus University of Budapest (BCE), Institute of Operations and Decision Sciences, Department of Operations Research and Actuarial Sciences, Hungary \newline
\texttt{laszlo.csato@uni-corvinus.hu}} \and 
Karel Devriesere\thanks{~Ghent University, Department of Business Informatics and Operations Management, Belgium \newline
FlandersMake@UGent -- core lab CVAMO, Ghent, Belgium \newline
\texttt{karel.devriesere@ugent.be}} \and
Dries Goossens\thanks{~Ghent University, Department of Business Informatics and Operations Management, Belgium \newline
FlandersMake@UGent -- core lab CVAMO, Ghent, Belgium \newline
\texttt{dries.goossens@ugent.be}} \and
Andr\'as Gyimesi\thanks{~University of P\'ecs, Faculty of Business and Economics, Hungary \newline
Institute for Computer Science and Control (SZTAKI), Hungarian Research Network (HUN-REN), Laboratory on Engineering and Management Intelligence, Research Group of Operations Research and Decision Systems, Budapest, Hungary \newline
\texttt{gyimesi.andras@ktk.pte.hu}} \and 
Roel Lambers\thanks{~Eindhoven University of Technology, Department of Mathematics and Computer Science, The Netherlands \newline
\texttt{r.lambers@tue.nl}} \and
Frits Spieksma\thanks{~Eindhoven University of Technology, Department of Mathematics and Computer Science, The Netherlands \newline
\texttt{f.c.r.spieksma@tue.nl}}
}
\date{\today}
\begin{document}

\maketitle

\begin{abstract}
\noindent
Starting in the 2024/25 season, the Union of European Football Associations (UEFA) has fundamentally changed the format of its club competitions: the group stage has been replaced by a league phase played by 36 teams in an incomplete round robin format.
This makes ranking the teams based on their results challenging because teams play against different sets of opponents, whose strengths vary. 
In this research note, we apply several well-known ranking methods for incomplete round robin tournaments to the 2024/25 UEFA Champions League league phase in order to check the robustness of the official ranking, as well as to call the attention of organizers to the non-trivial issue of ranking in these competitions. Our results show that it is doubtful whether the currently used point-based system provides the best ranking of the teams.

\keywords{fairness; incomplete round robin tournament; ranking; strength of schedule; UEFA Champions League}

\AMS{15A06, 91B14}

\JEL{C44, Z20}
\end{abstract}

\section{Introduction}

The 2024/25 UEFA Champions League has introduced a new format, replacing the traditional group stage with an incomplete round robin tournament where each of the 36 teams plays against eight different opponents. The official ranking is a lexicographical order based on the number of points, followed by goal difference, and the number of goals scored as tiebreakers. While using four strength-based pots is intended to create opponent sets of comparable strength, the difficulty of a team's set of opponents -- known as strength of schedule (SoS), see, e.g.\ \citet{Fearnhead2010} -- can still vary considerably between different teams.

\begin{table}[t!]
    \centering
    \caption{Strength of schedules in the 2024/25 UEFA Champions League league phase}
    \label{Table1}
    \begin{threeparttable}
	\rowcolors{1}{}{gray!20}	
       \begin{tabularx}{\textwidth}{l CCCC CCCC CR} \toprule
    Team & \multicolumn{2}{c}{Pot 1} & \multicolumn{2}{c}{Pot 2} & \multicolumn{2}{c}{Pot 3} & \multicolumn{2}{c}{Pot 4} & Sum & SoS \\ \bottomrule
    Liverpool  & 15  & 3  & 16  & 15  & 16  & 14  & 6  & 3  & 88  & 85  \\
    Barcelona  & 15  & 15  & 15  & 13  & 0  & 6  & 13  & 13  & 90  & 86  \\
    Arsenal  & 13  & 19  & 7  & 15  & 11  & 11  & 13  & 3  & 92  & 88  \\
    Inter Milan  & 3  & 11  & 19  & 16  & 6  & 0  & 13  & 4  & 72  & 68  \\
    Atleti  & 3  & 13  & 16  & 13  & 16  & 3  & 0  & 4  & 68  & 62  \\
    Leverkusen  & 19  & 21  & 15  & 18  & 3  & 13  & 4  & 13  & 106  & 99  \\
    Lille  & 15  & 21  & 12  & 18  & 13  & 11  & 6  & 6  & 102  & 95  \\
    Aston Villa  & 15  & 3  & 12  & 11  & 12  & 0  & 6  & 13  & 72  & 65  \\
    Atalanta  & 15  & 19  & 19  & 7  & 12  & 0  & 6  & 10  & 88  & 82  \\
    B.~Dortmund  & 19  & 15  & 7  & 11  & 12  & 11  & 6  & 6  & 87  & 78  \\
    Real Madrid  & 15  & 21  & 15  & 15  & 3  & 16  & 10  & 13  & 108  & 99  \\
    Bayern M\"unchen  & 13  & 19  & 13  & 7  & 11  & 13  & 0  & 16  & 92  & 83  \\
    Milan  & 21  & 15  & 11  & 16  & 6  & 11  & 3  & 0  & 83  & 74  \\
    PSV  & 21  & 13  & 7  & 12  & 11  & 6  & 3  & 13  & 86  & 78  \\
    Paris  & 11  & 15  & 18  & 19  & 14  & 3  & 3  & 10  & 93  & 83  \\
    Benfica  & 19  & 15  & 18  & 12  & 13  & 6  & 6  & 13  & 102  & 92  \\
    Monaco  & 19  & 19  & 13  & 19  & 6  & 11  & 16  & 6  & 109  & 99  \\
    Brest  & 15  & 19  & 16  & 7  & 14  & 3  & 6  & 4  & 84  & 74  \\
    Feyenoord  & 15  & 11  & 16  & 13  & 3  & 16  & 4  & 3  & 81  & 71  \\
    Juventus  & 11  & 3  & 13  & 11  & 14  & 16  & 10  & 16  & 94  & 85  \\
    Celtic  & 3  & 15  & 11  & 15  & 0  & 11  & 0  & 16  & 71  & 62  \\
    Man City  & 19  & 13  & 11  & 12  & 13  & 11  & 4  & 0  & 83  & 72  \\
    Sporting CP  & 11  & 3  & 19  & 11  & 16  & 14  & 6  & 6  & 86  & 75  \\
    Club Brugge  & 15  & 11  & 12  & 15  & 11  & 12  & 16  & 6  & 98  & 87  \\
    GNK Dinamo  & 15  & 15  & 15  & 19  & 12  & 3  & 13  & 0  & 92  & 81  \\
    Stuttgart  & 13  & 15  & 15  & 12  & 0  & 6  & 4  & 0  & 65  & 52  \\
    Shakhtar  & 15  & 15  & 15  & 19  & 0  & 14  & 13  & 6  & 97  & 81  \\
    Bologna  & 15  & 21  & 7  & 13  & 16  & 11  & 13  & 16  & 112  & 97  \\
    Crvena Zvezda  & 19  & 19  & 13  & 15  & 14  & 0  & 10  & 13  & 103  & 85  \\
    Sturm Graz  & 3  & 15  & 11  & 15  & 11  & 16  & 3  & 13  & 87  & 69  \\
    Sparta Praha  & 19  & 11  & 18  & 16  & 3  & 13  & 13  & 10  & 103  & 84  \\
    Leipzig  & 21  & 19  & 12  & 18  & 11  & 12  & 16  & 6  & 115  & 94  \\
    Girona  & 21  & 13  & 19  & 15  & 13  & 14  & 0  & 6  & 101  & 80  \\
    Salzburg  & 13  & 15  & 18  & 16  & 11  & 13  & 13  & 4  & 103  & 82  \\
    S.~Bratislava  & 11  & 15  & 15  & 18  & 11  & 12  & 10  & 3  & 95  & 71  \\
    Young Boys  & 19  & 19  & 15  & 7  & 6  & 12  & 16  & 10  & 104  & 80  \\ \bottomrule
    \end{tabularx}
    \begin{tablenotes} \footnotesize
        \item
        Each team has played against two different opponents from each pot, one at home and another away. The two columns Pot $i$ contain the number of points scored by the opponents from Pot $i$ in this order. For example, Liverpool has played against Real Madrid (that scored 15 points) at home and against Leipzig (that scored 3 points) away from Pot 1.
    	\item
        The column Sum gives, for each team, the total number of points collected by its eight opponents.
    	\item
        The column SoS shows, for each team, the difference between Sum and the number of points the team lost against its eight opponents.
    \end{tablenotes}
\end{threeparttable}
\end{table}

This is evident from Table~\ref{Table1}, which presents the number of points obtained by the opponents of each team (two from each pot). We calculate the strength of schedule of a team as the sum of its opponents' points, minus the points the team itself lost against them. Notably, \emph{Man City} and \emph{Sporting CP}, both of whom survived the league stage, faced significantly weaker opponents than \emph{GNK Dinamo}, which obtained the same number of points but was eliminated.
Furthermore, \emph{GNK Dinamo} was ranked below these teams based on goal difference, which was highly negative primarily due to their heavy defeat (2-9) against \emph{Bayern M\"unchen} in the first round. However, teams that face weaker opponents have an advantage regarding goal difference since securing high-margin victories is easier against weaker opponents, which again raises questions of fairness.
Analogously, in the race for the Round of 16, placing \emph{Aston Villa} in the Top 8 appears to have been strongly influenced by its relatively easy schedule. This stands in contrast to teams like \emph{Atalanta}, \emph{Borussia Dortmund}, \emph{Real Madrid}, \emph{Bayern M\"unchen}, and \emph{Milan}, all of them scoring just one point less.

The current note aims to explore alternative ranking methods that provide a different perspective on team performances by taking the strength of the schedule into account. By applying these approaches to the final league-phase standing in the 2024/25 UEFA Champions League, we illustrate how team rankings vary considerably depending on the chosen methodology. As the final ranking determines which teams qualify for the next phase, these results show the non-negligible effect of the ranking method on the fairness of the competition, and can call the attention of the decision makers to the issue of ranking in incomplete round robin tournaments. 

\section{Alternative ranking methods} \label{Sec2}

As an alternative to the official point-based ranking, we examine four classes of ranking methods suggested for incomplete round-robin tournaments, where teams face varying strengths of schedule: the direct ranking method (Section~\ref{Sec21}), Colley's ranking (Section~\ref{Sec22}), the generalized row sum (Section~\ref{Sec23}), and the least squares method (Section~\ref{Sec24}). Each of these ranking methods is derived by solving a system of linear equations that gives the strength $r_i$ of each team $i$. Note that other ranking methods exist, see, e.g.\ \citet{DevlinTreloar2018}, \citet{vaziri2018properties}, and \citet{DabadghaoVaziri2022}, as well as various metrics to express strength of schedule (see e.g.\ \citet{DeHollanderKarwan2025}).
Last but not least, Section~\ref{Sec25} discusses some connections between the ranking methods presented in Sections~\ref{Sec21}--\ref{Sec24}, while Section~\ref{Sec26} presents a concise comparison of them.


\subsection{The direct ranking method} \label{Sec21}

The direct ranking method assumes that the ranking is proportional to both the performance of teams against each of its opponents, as well as to the strength of its opponents. This ranking method (which goes back to \citet{Landau1895}) is obtained by solving the following linear system $Ar = \lambda r$:
\begin{gather} \label{eq1}
\begin{pmatrix}
    \frac{a_{11}}{t_1} & \frac{a_{12}}{t_1} & \dots & \frac{a_{1n}}{t_1}\\
    \frac{a_{21}}{t_2} & \frac{a_{22}}{t_2} & \dots & \frac{a_{2n}}{t_2}\\
    \dots & \dots & \dots & \dots \\
   \frac{a_{n1}}{t_n} & \frac{a_{n2}}{t_n} & \dots & \frac{a_{nn}}{t_n}\\
\end{pmatrix}
\begin{pmatrix}
    r_1 \\
    r_2 \\
    \dots \\
    r_n 
\end{pmatrix}
=
\lambda
\begin{pmatrix}
    r_1 \\
    r_2 \\
    \dots \\
    r_n 
\end{pmatrix}
\end{gather}
where $t_i$ is the number of games played by team $i$ (in our case, $t_i=8$ for all $i$), and $a_{ij}$ is the result of the match between teams $i$ and $j$ for team $i$. In particular, $a_{ij} = 3$ if $i$ wins, $a_{ij} = 1$ if the result is a draw, and $a_{ij} = 0$ if $i$ loses. The vector $(r_1, r_2, \dots, r_n)^T$ represents the strength ratings of the teams, which are used to determine the ranking. Note that this is a right eigenvector of $A$ and $\lambda$ is the associated eigenvalue.

\citet{Landau1895} motivates Equation~\ref{eq1} as follows. Let $r_i$ be the \emph{unknown} quality of player $i$, which is proportional to the expected value of its performance against all the opponents $\sum_{j=1}^n a_{ij}/t_i r_j$; of course, a better result against the same player or the same result against a player of a higher quality is beneficial.

A basic requirement for the direct ranking method to work is that the matrix $A$ should be irreducible, or, equivalently, that the graph which contains a node for each row/column and an arc from node $i$ to column $j$ if and only if $a_{ij}$ is nonzero, is a strongly connected graph. In other words, a path needs to exist from each node to any other node.

However, matrix $A$ is \emph{not} necessarily irreducible; indeed, this is the case in the 2024/25 UEFA Champions League league phase as two teams (\emph{S.~Bratislava} and \emph{Young Boys}) lost all their matches. We apply the following trick to arrive at an irreducible matrix $A$. We add two dummy teams, say teams B and C. Team B beats all other non-dummy teams, and team C loses against all other non-dummy teams, while team C beats team B. It is easy to see that the resulting matrix $A$ is now irreducible; of course, the dummy teams are removed from the ranking.

More information on the direct ranking method can be found in \citet{keener1993perron}, \citet{LambersSpieksma2020}, and \citet{SinnZiegler2022}.

\subsection{Colley} \label{Sec22}

Colley's ranking method aims to rank the teams based on their winning percentage and the strength of the schedule. This ranking is obtained by solving the following linear system of equations:
\begin{gather} \label{eq2}
\begin{pmatrix}
    2 + t_1 & -n_{12} & \dots & -n_{1n}\\
    -n_{21} & 2+t_2 &  \dots & -n_{2n} \\
    \dots & \dots & \dots & \dots \\
    -n_{n1} & -n_{n2} & \dots & 2 + t_{n} 
\end{pmatrix}
\begin{pmatrix}
    r_1 \\
    r_2 \\
    \dots \\
    r_n 
\end{pmatrix}
=
\begin{pmatrix}
    b_1 \\
    b_2 \\
    \dots \\
    b_n 
\end{pmatrix},
\end{gather}
where $n_{ij} = n_{ji}$ is the number of matches between teams $i$ and $j$ (in our case, this is either 0 or 1), and $b_i = 1 + \left( w_i-\ell_i \right) / 2$ with $w_i$ being the number of wins and $\ell_i$ being the number of losses. More information can be found in \citet{colley2002colley}, \citet{chartier2011sensitivity}, and \citet{vaziri2018properties}.


\subsection{Generalized Row Sum} \label{Sec23}

The generalized row sum (GRS) is a parametric family of ranking methods that adjusts the average number of points per game by taking into account the performance of the opponents. Specifically, a ranking is obtained by solving the following linear system:
\begin{gather} \label{eq3}
\left[ 
\begin{pmatrix}
    1 & 0 & \dots & 0 \\
    0 & 1 & \dots & 0 \\
    \dots & \dots & \dots & \dots \\
    0 & 0 & \dots & 1
\end{pmatrix}
+
\varepsilon
\begin{pmatrix}
    t_1 & -n_{12} & \dots & -n_{1n}\\
    -n_{21} & t_2 & \dots & -n_{2n} \\
    \dots & \dots & \dots & \dots \\
    -n_{n1} & -n_{n2} & \dots & t_{n} 
\end{pmatrix}
\right]
\begin{pmatrix}
    r_1 \\
    r_2 \\
    \dots \\
    r_n 
\end{pmatrix}
=
\begin{pmatrix}
    s_1 \\
    s_2 \\
    \dots \\
    s_n 
\end{pmatrix}
\end{gather}
where $s_i$ is the \emph{normalized} number of points of team $i$, $\varepsilon \geq 0$ is a parameter.

The points are normalized by subtracting the average number of points of all teams to get a score vector whose elements sum up to zero. This is required because GRS is originally defined in a setting where a win is worth $+1$ point, a draw is worth $0$ points, and a loss is worth $-1$ point \citep{Chebotarev1994, gonzalez2014paired}. However, the UEFA Champions League uses the $\{ 3,1,0 \}$ scoring system, the current standard in association football. Subtracting the average guarantees that $\varepsilon = 0$ results in the ranking given by the number of points collected in all matches \citep{Csato2021d}. In the UEFA Champions League league phase, the average number of points equals $(144 \times 3 - d)/36 = 12 - d/36$, where $d$ is the number of draws in the season. For example, in the 2024/25 season, the normalized score of \emph{Liverpool} is $21 - (12 - 18/36) = 9.5$ since $d = 18$.

On the left-hand side of the system of equations~\eqref{eq3}, the first term corresponds to the number of points obtained by the teams, while the second term reflects their strength of schedule. In particular, parameter $\varepsilon$ is multiplied by the Laplacian matrix of the (undirected) graph where the edges represent the matches played. Therefore, the rating $r_i$ of team $i$ becomes larger (smaller) than $s_i$ if its strength of schedule is above (below) the average strength of schedule. Naturally, the strength of schedule accounts for not only the strength of the direct opponents, but also the strength of the opponents of opponents, and so on.

If $\varepsilon = 0$, GRS is equivalent to the point-based ranking without any tie-breaking criterion. Strength of schedule can be used as a tie-breaking rule if the value of parameter $\varepsilon$ is small. However, higher values of parameter $\varepsilon$ may swap teams that have different numbers of points. More details about this method can be found in \citet{Chebotarev1994}, \citet{gonzalez2014paired}, \citet{Csato2017a}, and \citet{Csato2021d}.

A crucial property of the generalized row sum method is that it implies the same ranking for any value of $\varepsilon$ as the number of points if each team plays the same number of games against any other team, namely, if the tournament is (complete) round robin. \citet{LeivaBertran2025} has recently proposed a new family of generalized win percentage scoring methods for ranking teams in incomplete tournaments, which satisfies a number of reasonable theoretical properties (e.g.\ win dominance, win/loss fairness). This method coincides with the generalized row sum if all teams play the same number of matches \citep[Proposition~1]{LeivaBertran2025} as in the UEFA Champions League league phase.

\subsection{Least squares} \label{Sec24}

Another commonly used ranking method to accommodate incomplete schedules is the least squares method \citep{Stefani1977, Stefani1980, LasekSzlavikBhulai2013}. This involves minimizing the sum of squared errors such that the average rating is equal to zero.
Let $S$ be the set of all matches that were played. Let $m_{ij}$ be the goal difference between the home team $i$ and the away team $j$ for each match $(i,j) \in S$. We then solve the following optimization problem:
\begin{equation} \label{eq4}
\begin{alignedat}{2}
    \text{min} \quad 
    & \sum_{(i,j) \in S} \left[ (r_{i} - r_{j}) - m_{ij} \right]^2 \\
    \text{s.t.} \quad 
    &\sum_{i=1}^{N} r_i = 0  && \\
    & r_i \in \mathbb{R}. && 
\end{alignedat}  
\end{equation}
The rating $r_i$ can be interpreted as the expected goal difference of team $i$ against an average team in the competition. \citet{LasekSzlavikBhulai2013} show that the least squares method has good predictive performance. For more details, we refer to \citet{LaprePalazzolo2022}.

\subsection{Some connections between ranking methods} \label{Sec25}

Here we show a relationship between Colley and GRS (Section~\ref{Sec251}), as well as between GRS and least squares (Section~\ref{Sec252}) methods.

\subsubsection{The Colley and Generalized Row Sum methods} \label{Sec251}

\citet{LeivaBertran2025} introduces the family of generalized win percentage scoring methods, which is equivalent to the GRS if all teams play the same number of matches, and is ``similar to the Colley matrix method'' for a particular value of its parameter. This section uncovers what this similarity exactly means.

We refer to a setting where a win gives 2 points, a draw gives 1 point, and a loss gives 0 points, as the $\{ 2,1,0 \}$ scoring system. For such a scoring system, the following statement can be derived.

\begin{proposition}
Under the $\{ 2,1,0 \}$ scoring system, if every team plays the same number of matches $t$, the GRS ranking with $\varepsilon = 1/2$ is identical to the Colley ranking.  
\end{proposition}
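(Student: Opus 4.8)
The plan is to exploit the fact that, after a harmless rescaling, the two linear systems share \emph{exactly} the same coefficient matrix, so that the whole question reduces to comparing their right-hand sides. Write $L$ for the graph Laplacian common to \eqref{eq2} and \eqref{eq3}, i.e.\ the symmetric matrix with diagonal entries $t$ and off-diagonal entries $-n_{ij}$. Then Colley's system \eqref{eq2} reads $(2I + L) r = b$, while the GRS system \eqref{eq3} with $\varepsilon = 1/2$ reads $(I + \tfrac{1}{2} L) r = s$. Multiplying the latter through by $2$ turns it into $(2I + L) r = 2s$, which has precisely the Colley coefficient matrix. Since $L$ is positive semidefinite, $2I + L$ has all eigenvalues at least $2$, hence is positive definite and invertible; I would record this first so that each system has a unique solution and all subsequent manipulations are justified.

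Next I would compute the two right-hand sides explicitly under the $\{2,1,0\}$ scoring system. Let $w_i$, $d_i$, $\ell_i$ denote the wins, draws, and losses of team $i$, so that $w_i + d_i + \ell_i = t$ for \emph{every} team (this is where the equal-number-of-matches hypothesis enters). The raw points are $p_i = 2 w_i + d_i = t + (w_i - \ell_i)$, and since total wins equal total losses the average is $\bar p = t$; hence the normalized GRS score is $s_i = p_i - \bar p = w_i - \ell_i$. Colley's entry is $b_i = 1 + (w_i - \ell_i)/2$. Comparing these, I would establish the affine identity $2 s = 4 b - 4 \mathbf{1}$, where $\mathbf{1}$ is the all-ones vector.

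Finally, I would transfer this affine relation from the right-hand sides to the solutions. Applying $(2I + L)^{-1}$ to $2s = 4b - 4\mathbf{1}$ gives $r^{\mathrm{GRS}} = 4\, r^{\mathrm{Colley}} - 4 (2I + L)^{-1}\mathbf{1}$. The key structural fact is that the row sums of a Laplacian vanish, i.e.\ $L \mathbf{1} = 0$, so that $(2I + L)\mathbf{1} = 2 \mathbf{1}$ and therefore $(2I + L)^{-1}\mathbf{1} = \tfrac{1}{2}\mathbf{1}$. Substituting yields $r^{\mathrm{GRS}} = 4\, r^{\mathrm{Colley}} - 2 \mathbf{1}$, a strictly increasing affine transformation of the Colley ratings (positive scaling followed by a uniform shift), which preserves the induced order, ties included, and thus proves the claim.

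The step I expect to be the main obstacle is the bookkeeping that links the two right-hand sides. The naive hope that $2s$ and $b$ coincide fails; what saves the argument is recognizing that (i) a \emph{positive affine} relation between the score vectors already suffices for rank-equivalence, and (ii) the stray constant vector produced by the differing scoring conventions is neutralized precisely because $\mathbf{1}$ lies in the kernel of $L$. Getting the normalization of $s$ right (verifying $\bar p = t$ and hence $s_i = w_i - \ell_i$) and tracking the constants through $(2I+L)^{-1}$ is the delicate part; everything else is routine linear algebra.
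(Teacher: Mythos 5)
Your proof is correct and follows essentially the same route as the paper's: rescale the GRS system so that it shares the Colley coefficient matrix $2I+L$, relate the right-hand sides by a positive affine map, and use $L\mathbf{1}=0$ to show the solutions differ by a positive scaling and a uniform shift (your $r^{\mathrm{GRS}}=4\,r^{\mathrm{Colley}}-2\mathbf{1}$ matches the paper's Equation~\eqref{eq5} once the normalization constant is evaluated). The only differences are cosmetic: you derive the solution relation forward via $(2I+L)^{-1}$ and explicitly justify invertibility through positive definiteness, whereas the paper posits the relation and verifies it by substitution.
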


\begin{proof}
Let \begin{gather}
L = 
\begin{pmatrix}
    t & -n_{12} & \dots & -n_{1n}\\
    -n_{21} & t & \dots & -n_{2n} \\
    \dots & \dots & \dots & \dots \\
    -n_{n1} & -n_{n2} & \dots & t 
\end{pmatrix}.
\end{gather}
Observe that, for $\varepsilon = 1/2$, the system of equations~\eqref{eq3} that defines GRS can be written as $(I + L/2) r = s$, where the vector $r$ denotes the variables.
Furthermore, the system of equations~\eqref{eq2} that defines the Colley ranking can be written as $(2I + L) q = b$, which is equivalent to $2(I + L/2) q = b$, where the vector $q$ denotes the variables.

The solutions $r^*$ and $q^*$ to \eqref{eq3} and \eqref{eq2}, respectively, are proved to be related: $q_i^* = r_i^*/4 - (t+c-2)/4$ for each $1 \leq i \leq n$, where $c = -\sum_{i=1}^n \left( 2w_i + d_i \right)/t$. More compactly stated:
\begin{equation} \label{eq5}
q^* = r^*/4 - (t+c-2)/4.
\end{equation}
Since the term $(t+c-2)/4$ is constant, the rankings corresponding to vectors $q^*$ and $r^*$ coincide. The rest of the proof consists of verifying this claim.

First, we establish a relation between the right-hand sides of \eqref{eq3} and \eqref{eq2}, i.e., between vectors $s$ and $b$. The definition of $s$ (see Section~\ref{Sec23}) together with the $\{ 2,1,0 \}$ scoring system imply that $s_i=2w_i+d_i + c$ for each team $i$, where $c = -\sum_{i=1}^n \left( 2w_i + d_i \right)/t$. In addition, $t = w_i+d_i + \ell_i$ for each team $i$.
Subtracting this equality from both sides gives $s_i-t = w_i-\ell_i +c$ for all $1 \leq i \leq n$. Since $b_i = 1+ (w_i -\ell_i)/2$ for each $i$ (see Section~\ref{Sec22}), $2b_i = w_i - \ell_i + 2 = s_i - t - c + 2$ holds for each $i$, or in vector notation:
\begin{equation} \label{eq6}
2b = s - t - c + 2.
\end{equation}

Let $q^*$ be a solution to~\eqref{eq2}, i.e., $2(I+L/2)q^* = b$, which is equivalent to 
\begin{equation} \label{eq7}
(I + L/2)4q^*=2b.
\end{equation}
Substituting~\eqref{eq5} and~\eqref{eq6} into~\eqref{eq7} gives: 
\begin{equation} \label{eq8}
(I+L/2)(r^* -t -c+2) = s-t-c+2.
\end{equation}
The row sums of matrix $L$ equal 0, and hence $Le = 0$ for any constant vector $e$. Thus, the left hand side of~\eqref{eq8} can be written as:
\begin{equation*}
(I+L/2)(r^* - t - c + 2) = (I+L/2) r^* + (I+L/2)(-t-c+2) = (I+L/2) r^* + (-t-c+2).
\end{equation*}
This expression equals $s-t-c+2$ according to \eqref{eq8}, thus, $(I+L/2)r^* = s$, and the claim is proven.
\end{proof}

\subsubsection{The Generalized Row Sum and least squares methods} \label{Sec252}

The GRS ranking can also be derived as the optimal solution of a least squares optimization problem if $\varepsilon \to \infty$. Let $h_{ij} = p_{ij} - \bar{p}$ for all matches $(i,j) \in S$ that were played, where $p_{ij}$ is the number of points scored by team $i$ against team $j$ in their match, and $\bar{p}$ is the average number of points per team in a match. Hence, $\bar{p} = (3 \times 144 - d)/(2 \times 144) = 3/2 - d/288$ in the UEFA Champions League league phase. In particular, $d = 18$ implies $\bar{p} = 23/26 = 1.4375$ in the 2024/25 season.

The GRS rating vector $\left[ r_1, r_2, \dots r_n \right]^T$ with $\varepsilon \to \infty$ and $\sum_{i=1}^n r_i = 0$ is the unique minimizer of the following optimization problem \citep[p.~144]{gonzalez2014paired}:
\begin{alignat*}{2}
    \text{min} \quad 
    & \sum_{(i,j) \in S} \left[ (x_{i} - x_{j}) - h_{ij} \right]^2 \\
    \text{s.t.} \quad 
    &\sum_{i=1}^{N} x_i = 0  && \\
    & x_i \in \mathbb{R}. && 
\end{alignat*}
Thus, the only difference between GRS with $\varepsilon \to \infty$ and least squares (the system of equations~\eqref{eq4}) is that the former assumes that the rating is the expected normalized number of points scored against an average team, while the latter assumes that the rating is the expected goal difference against an average team.

\subsection{Comparison of the ranking methods} \label{Sec26}

When applied to a complete round robin tournament where each team plays the same number of matches against all other teams, the ranking obtained from the direct ranking method is not necessarily the same as the point-based ranking. In particular, the direct ranking method favors teams that perform well against high-ranked teams and poorly against low-ranked teams, compared to teams that perform well against low-ranked teams and poorly against high-ranked teams.

The main advantage of GRS resides in its good axiomatic properties \citep{Chebotarev1994, gonzalez2014paired}: it retains several conditions that are satisfied by the ranking obtained from the number of points in a round robin tournament \citep{Csato2021d}. For instance, the GRS ranking coincides with the ranking based on the number of points if parameter $\varepsilon$ equals zero. Consequently, GRS provides an ideal tie-breaking rule with a small value of $\varepsilon$ based on the strength of schedule, even though this might be somewhat unfair since the teams have no control over the performance of their opponents in contrast to goal difference.

According to our knowledge, there are no studies on how to choose the value of parameter $\varepsilon$ in incomplete round robin tournaments. Therefore, we see no particular reason why to pick up the Colley method, which would be a special case of GRS if UEFA would use the $\{ 2,1,0 \}$ scoring system according to Section~\ref{Sec251}. Furthermore, since UEFA uses the $\{ 3,1,0 \}$ scoring system, the original variant of the Colley method presented in Section~\ref{Sec22} does not imply the same ranking as the official point-based ranking.

GRS can also be used to establish a dominance relation between two teams; it would be difficult to argue for ranking team $i$ above team $j$ if $r_i (\varepsilon) \leq r_j (\varepsilon)$ holds for any $\varepsilon \geq 0$. In contrast to the direct ranking method, the ranking according to GRS does not depend on whether a team performs well against high- or low-ranked teams since the sets of opponents and the number of points scored by them are treated separately in the system of equations~\eqref{eq3}.

In contrast to the official ranking, the GRS is not influenced by goal difference. This could be a disadvantage by removing any incentives to win by more goals, and can call for the least squares method based on goal differences as formulated in Section~\ref{Sec24}. But the latter approach seems to give an excessive role to goal difference at the expense of the trichotomous match outcome.

\section{Results}

\begin{table}[t!]
    \centering
    \caption{Rankings with three alternative methods in the 2024/25 UEFA Champions League league phase}
    \label{Table2}
    \begin{threeparttable}
	\rowcolors{1}{}{gray!20}
       \begin{tabularx}{\textwidth}{l CCC CCCC} \toprule
    Method & Points & Official ranking  & Direct & Colley & GRS ($\varepsilon=0.01$) & GRS ($\varepsilon \to \infty$) & Least squares \\ \bottomrule
    Liverpool & 21    & 1     & 1     & 1 & 1     & 1     & 3 \\
    Barcelona & 19    & 2     & 2     & 3 & 3     & 2     & 1 \\
    Arsenal & 19    & 3     & 3     & 2 & 2     & 3     & 2 \\
    Inter Milan & 19    & 4     & 5     & 4 & 4     & 6     & 8 \\
    Atleti & 18    & 5     & 10    & 8 & 5     & 8     & 18 \\
    Leverkusen & 16    & 6     & 6     & 6 & 6     & 5     & 12 \\
    Lille & 16    & 7     & 4     & 5 & 7     & 4     & 6 \\
    Aston Villa & 16    & 8     & 13    & 11 & 8     & 15    & 11 \\
    Atalanta & 15    & 9     & 18    & 9 & 11    & 14    & 4 \\
    B.~Dortmund & 15    & 10    & 14    & 12 & 12    & 12    & 5 \\
    Real Madrid & 15    & 11    & 7     & 7 & 9     & 7     & 7 \\
    Bayern M\"unchen & 15    & 12    & 11    & 13 & 10    & 10    & 9 \\
    Milan & 15    & 13    & 17    & 16 & 13    & 13    & 20 \\
    PSV   & 14    & 14    & 12    & 15 & 14    & 16    & 14 \\
    Paris & 13    & 15    & 22    & 18 & 17    & 18    & 15 \\
    Benfica & 13    & 16    & 8     & 14 & 16    & 11    & 10 \\
    Monaco & 13    & 17    & 9     & 10 & 15    & 9     & 13 \\
    Brest & 13    & 18    & 21    & 19 & 18    & 17    & 23 \\
    Feyenoord & 13    & 19    & 16    & 20 & 19    & 19    & 27 \\
    Juventus & 12    & 20    & 15    & 17 & 20    & 20    & 16 \\
    Celtic & 12    & 21    & 25    & 24 & 21    & 25    & 25 \\
    Man City & 11    & 22    & 23    & 25 & 25    & 24    & 21 \\
    Sporting CP & 11    & 23    & 19    & 22 & 24    & 23    & 17 \\
    Club Brugge & 11    & 24    & 20    & 21 & 22    & 22    & 22 \\
    GNK Dinamo & 11    & 25    & 24    & 23 & 23    & 21    & 32 \\
    Stuttgart & 10    & 26    & 27    & 28 & 26    & 29    & 31 \\
    Shakhtar & 7     & 27    & 28    & 27 & 27    & 27    & 26 \\
    Bologna & 6     & 28    & 26    & 26 & 28    & 26    & 19 \\
    Crvena Zvezda & 6     & 29    & 30    & 29 & 29    & 28    & 29 \\
    Sturm Graz & 6     & 30    & 32    & 30 & 30    & 30    & 28 \\
    Sparta Praha & 4     & 31    & 33    & 31 & 31    & 32    & 33 \\
    Leipzig & 3     & 32    & 31    & 32 & 32    & 31    & 24 \\
    Girona & 3     & 33    & 34    & 34 & 34    & 34    & 30 \\
    Salzburg & 3     & 34    & 29    & 33 & 33    & 33    & 36 \\
    S.~Bratislava & 0     & 35    & 35    & 36 & 36    & 36    & 35 \\
    Young Boys & 0     & 36    & 36    & 35 & 35    & 35    & 34 \\ \bottomrule
    \end{tabularx}
    \begin{tablenotes} \footnotesize
        \item The second column Ranking shows the position in the official league table.
        \item GRS stands for Generalized row sum.
    \end{tablenotes}
    \end{threeparttable}
\end{table}


The rankings produced by the alternative ranking methods in Section~\ref{Sec2} are shown in Table~\ref{Table2}. For the generalized row sum, we provide the two extremal results: when the strength of schedule is used only for tie-breaking ($\varepsilon$=0.01) and when the performance of the opponents has the highest possible effect ($\varepsilon \to \infty$). The rankings for other values of the parameter $\varepsilon$ can be found in the Appendix, in Table~\ref{TableA1}.


Clearly, different methods produce different rankings.
However, while the direct, Colley, and Generalized Row Sum methods tend to agree, the least squares method give some surprising results. For example, \emph{Atleti}, with 18 points, is ranked behind \emph{Sporting CP} that scored only 11 points, and is directly followed by \emph{Bologna}, with only 6 points. This can be explained by its focus on goal differences as discussed in Section~\ref{Sec26}: the opponents of \emph{Atleti} and \emph{Bologna} have a cumulative goal difference of $-33$ and $+42$, respectively. Consequently, least squares indicates that \emph{Atleti} (\emph{Bologna}) had an exceptionally (dis)favourable schedule, which is only partially supported by the other methods based on the number of points scored. Therefore, the ranking produced by the least squares will be treated separately in the following discussion.

With respect to qualification, the clubs are essentially allocated into three sets: ranked in the Top 8 (direct qualification for the Round of 16), ranked between 9th and 24th (qualification for the knockout phase play-offs), and ranked between 25th and 36th (elimination).
In view of these thresholds, the results can be summarized as follows:
\begin{itemize}
\item
\emph{Aston Villa} falls out of the Top 8 when the strength of schedule has a higher impact. In particular, it is only ranked 13th in the direct ranking method and even only 15th in the GRS with $\varepsilon \to \infty$. On the other hand, \emph{Real Madrid} is ranked 7th according to the direct method and GRS ($\varepsilon \to \infty$). Hence, it can be argued that \emph{Real Madrid} should have directly qualified for the Round of 16 at the expense of \emph{Aston Villa}, even though the latter club has scored one point more.
\item
\emph{Monaco} is clearly the best team among the five with 13 points if the schedule of strength is taken into account, and it can even be ranked higher than some teams with 15 points. Similarly, \emph{Benfica} could also be ranked higher; it is placed in the Top 8 by the direct method.
\item
All alternative rankings except GRS if the strength of schedule plays a minor role ($\varepsilon < 0.5$) exclude \emph{Celtic} from the Top 24. However, the Croatian champion \emph{GNK Dinamo} is eliminated only according to the official ranking rule; if its strength of schedule is taken into account, it is never ranked below the 23rd position.
\end{itemize}
Compared to the official ranking, the least squares method removes three teams from the Top 8 and places three teams with 15 points directly in this set. Furthermore, \emph{Feyenoord} would be eliminated instead of \emph{Bologna}, which would be difficult to explain for the stakeholders since the former team has 13 points rather than 6. Last but not least, even \emph{Leipzig} would qualify for the play-offs despite its sole win against \emph{Sporting CP}.

Nonetheless, some teams are consistently ranked in the same position:
\begin{itemize}
\item
All ranking methods agree on the top three teams, which are \emph{Liverpool}, \emph{Barcelona}, and \emph{Arsenal}.
\item
In the GRS, seven teams have a guaranteed place in the Top 8 independently of the value of parameter $\varepsilon$.
\item
All ranking methods agree on the elimination of nine teams (\emph{Stuttgart, Shakhtar, Crvena Zvezda, Sturm Graz, Sparta Praha, Girona, Salzburg, S.~Bratislava}, and \emph{Young Boys}).
\end{itemize}

In some cases, the ranking methods produce remarkably different results even if the least squares method is disregarded. For example, \emph{Atalanta} is officially ranked 9th, while it is ranked 18th according to the direct method.
The result of the Colley ranking method is quite similar to the official ranking, which can be explained by its strong relation to the GRS as presented in Section~\ref{Sec251}.

As for the GRS ranking, playing only eight rounds in a league with 36 teams seems to be insufficient to obtain a unique ranking. In contrast, in the 2019/20 season of the double round robin European football leagues, where 18--20 teams played 25--29 rounds before the COVID-19 pandemic, the ranking with the generalized row sum method was essentially insensitive to the value of parameter $\varepsilon$ \citep{Csato2021d}. Indeed, playing a higher number of rounds increases the variance of the number of points and decreases the variance in the strength of opponents, reducing the probability that a team with fewer points scored against stronger opponents is ranked above a team with more points scored against weaker opponents.

\section{Discussion}

Ranking teams in an incomplete round robin tournament is challenging since it is difficult to choose an uncontroversial balance between the number of points scored and the strength of schedule. Our results reveal that this issue is also relevant for the 2024/25 UEFA Champions League league phase: both the set of teams directly qualified for the Round of 16 and the set of teams eliminated would have changed when using reasonable alternative ranking methods. Consequently, the current mechanism using four pots based on strengths according to the UEFA club coefficients may not sufficiently reduce the impact of different strengths of opponent sets.

Possible remedies include
(1) using a ranking that reflects the strengths of the teams better than UEFA club coefficients to assign the teams to pots (e.g.\ a ranking based on Elo ratings as demonstrated by \citet{Csato2024c}), and
(2) using the strength of schedule as the primary tie-breaking criterion instead of goal difference (which would boil down to the GRS ranking with, for example, $\varepsilon=0.01$). 
Both recommendations seem to be realistic since the Elo approach has been adopted for the FIFA World Ranking in 2018 \citep{FIFA2018c} and the Buchholz method, 
based on the arithmetic mean of the points scored by opponents, is the most common tie-breaking rule in Swiss-system tournaments \citep{Freixas2022}.


\section*{Acknowledgements}
\addcontentsline{toc}{section}{Acknowledgements}
\noindent
Four reviewers (\emph{Julien Guyon}, \emph{Mehmet S.~Ismail}, \emph{Michael A.~Lapr\'e}, \emph{Fernando Leiva Bertr\'an}), and \emph{Kees Pippel} provided valuable remarks on an earlier draft. \\
The research was supported by the National Research, Development and Innovation Office under Grants FK 145838, by the J\'anos Bolyai Research Scholarship of the Hungarian Academy of Sciences, and by the Research Foundation - Flanders (FWO) under grant no.~S005521N.

\bibliographystyle{apalike} 
\bibliography{bibliography}

\clearpage

\section*{Appendix}
\renewcommand{\thetable}{A.\arabic{table}}
\setcounter{table}{0}

\begin{table}[ht!]
  \centering
   \caption{Rankings with the Generalized Row Sum method in the 2024/25 UEFA Champions League league phase}
   \label{TableA1}
\centerline{
\begin{threeparttable}
	\rowcolors{1}{}{gray!20}	
    \begin{tabularx}{1\textwidth}{lc CCCC CCCC CCCC} \toprule
    Parameter ($\varepsilon$) & Ranking & 0     & 0.01  & 0.1   & 0.25  & 0.5   & 0.75  & 1     & 2     & 5     & 10    & 100   & $\infty$ \\ \bottomrule
    Liverpool & 1     & 1     & 1     & 1     & 1     & 1     & 1     & 1     & 1     & 1     & 1     & 1     & 1 \\
    Barcelona & 2     & 2     & 3     & 3     & 2     & 2     & 2     & 2     & 2     & 2     & 2     & 2     & 2 \\
    Arsenal & 3     & 2     & 2     & 2     & 3     & 3     & 3     & 3     & 3     & 3     & 3     & 3     & 3 \\
    Inter Milan & 4     & 2     & 4     & 4     & 4     & 4     & 4     & 4     & 5     & 6     & 6     & 6     & 6 \\
    Atleti & 5     & 5     & 5     & 5     & 7     & 8     & 8     & 8     & 8     & 8     & 8     & 8     & 8 \\
    Leverkusen & 6     & 6     & 6     & 6     & 5     & 5     & 6     & 6     & 6     & 5     & 5     & 5     & 5 \\
    Lille & 7     & 6     & 7     & 7     & 6     & 6     & 5     & 5     & 4     & 4     & 4     & 4     & 4 \\
    Aston Villa & 8     & 6     & 8     & 9     & 11    & 13    & 15    & 15    & 15    & 15    & 15    & 15    & 15 \\
    Atalanta & 9     & 9     & 11    & 12    & 13    & 14    & 14    & 14    & 14    & 14    & 14    & 14    & 14 \\
    B.~Dortmund & 10    & 9     & 12    & 11    & 10    & 11    & 11    & 11    & 11    & 12    & 12    & 12    & 12 \\
    Real Madrid & 11    & 9     & 9     & 8     & 8     & 7     & 7     & 7     & 7     & 7     & 7     & 7     & 7 \\
    Bayern M\"unchen & 12    & 9     & 10    & 10    & 9     & 9     & 10    & 10    & 10    & 10    & 10    & 10    & 10 \\
    Milan & 13    & 9     & 13    & 13    & 12    & 12    & 12    & 12    & 13    & 13    & 13    & 13    & 13 \\
    PSV   & 14    & 14    & 14    & 15    & 16    & 16    & 16    & 16    & 16    & 16    & 16    & 16    & 16 \\
    Paris & 15    & 15    & 17    & 17    & 17    & 17    & 17    & 18    & 18    & 18    & 18    & 18    & 18 \\
    Benfica & 16    & 15    & 16    & 16    & 15    & 15    & 13    & 13    & 12    & 11    & 11    & 11    & 11 \\
    Monaco & 17    & 15    & 15    & 14    & 14    & 10    & 9     & 9     & 9     & 9     & 9     & 9     & 9 \\
    Brest & 18    & 15    & 18    & 18    & 18    & 18    & 18    & 17    & 17    & 17    & 17    & 17    & 17 \\
    Feyenoord & 19    & 15    & 19    & 19    & 19    & 19    & 19    & 19    & 19    & 19    & 19    & 19    & 19 \\
    Juventus & 20    & 20    & 20    & 20    & 20    & 20    & 20    & 20    & 20    & 20    & 20    & 20    & 20 \\
    Celtic & 21    & 20    & 21    & 23    & 24    & 25    & 25    & 25    & 25    & 25    & 25    & 25    & 25 \\
    Man City & 22    & 22    & 25    & 25    & 25    & 24    & 24    & 24    & 24    & 24    & 24    & 24    & 24 \\
    Sporting CP & 23    & 22    & 24    & 24    & 23    & 23    & 23    & 23    & 23    & 23    & 23    & 23    & 23 \\
    Club Brugge & 24    & 22    & 22    & 21    & 21    & 21    & 22    & 22    & 22    & 22    & 22    & 22    & 22 \\
    GNK Dinamo & 25    & 22    & 23    & 22    & 22    & 22    & 21    & 21    & 21    & 21    & 21    & 21    & 21 \\
    Stuttgart & 26    & 26    & 26    & 26    & 26    & 28    & 28    & 28    & 29    & 29    & 29    & 29    & 29 \\
    Shakhtar & 27    & 27    & 27    & 27    & 28    & 27    & 27    & 27    & 27    & 27    & 27    & 27    & 27 \\
    Bologna & 28    & 28    & 28    & 28    & 27    & 26    & 26    & 26    & 26    & 26    & 26    & 26    & 26 \\
    Crvena Zvezda & 29    & 28    & 29    & 29    & 29    & 29    & 29    & 29    & 28    & 28    & 28    & 28    & 28 \\
    Sturm Graz & 30    & 28    & 30    & 30    & 30    & 30    & 30    & 30    & 30    & 30    & 30    & 30    & 30 \\
    Sparta Praha & 31    & 31    & 31    & 31    & 31    & 32    & 32    & 32    & 32    & 32    & 32    & 32    & 32 \\
    Leipzig & 32    & 32    & 32    & 32    & 32    & 31    & 31    & 31    & 31    & 31    & 31    & 31    & 31 \\
    Girona & 33    & 32    & 34    & 34    & 34    & 34    & 34    & 34    & 34    & 34    & 34    & 34    & 34 \\
    Salzburg & 34    & 32    & 33    & 33    & 33    & 33    & 33    & 33    & 33    & 33    & 33    & 33    & 33 \\
    S.~Bratislava & 35    & 35    & 36    & 36    & 36    & 36    & 36    & 36    & 36    & 36    & 36    & 36    & 36 \\
    Young Boys & 36    & 35    & 35    & 35    & 35    & 35    & 35    & 35    & 35    & 35    & 35    & 35    & 35 \\ \bottomrule
    \end{tabularx}
    \begin{tablenotes} \footnotesize
    	\item Teams are ranked according to the official league phase table.
    	\item The second column Ranking shows the position in the official league table.
    \end{tablenotes}
\end{threeparttable}
}
\end{table}

\end{document}